\documentclass[prd,aps,floats,preprint,preprintnumbers,amssymb,longbibliography,nofootinbib]{revtex4-2}

\usepackage[T1]{fontenc}
\usepackage[utf8]{inputenc}
\usepackage{lmodern}
\usepackage{graphicx}
\usepackage{bm}
\usepackage{physics}
\usepackage{amsthm}
\usepackage{xcolor}
\definecolor{linkblack}{RGB}{20,20,20}
\definecolor{citeblue}{RGB}{0,70,140}
\definecolor{urlblue}{RGB}{0,90,120}
\usepackage{hyperref}
\hypersetup{
  colorlinks=true,
  linkcolor=linkblack,
  citecolor=citeblue,
  urlcolor=urlblue
}

\newtheorem{theorem}{Theorem}

\newtheorem{corollary}{Corollary}

\usepackage{setspace}
\usepackage{etoolbox}

\makeatletter
\patchcmd{\@footnotetext}
  {\reset@font\footnotesize}
  {\reset@font\footnotesize\setstretch{1}}
  {}{}
\makeatother

\begin{document}

\title{The fate of Reissner--Nordstr\"om--de~Sitter black holes:
nonequilibrium discharge and evaporation}

\author{Damien~A.~Easson}
\affiliation{Department of Physics, Arizona State University, Tempe, Arizona 85287, USA}
\affiliation{Beyond Center for Fundamental Concepts in Science, Arizona State University, Tempe, Arizona 85287, USA}

\begin{abstract}
We develop an analytic semiclassical description of
Reissner--Nordstr\"om--de~Sitter (RN--dS) evaporation by combining a
spherically reduced two-dimensional dilaton gravity model with Polyakov
anomaly backreaction.  The framework captures the causal and
thermodynamic structure of the static patch and yields closed adiabatic
evolution equations for the mass and charge.  With an outward-oriented signed flux convention, the anomaly-induced
Killing-energy flux is
$\mathcal F=(N_{\rm eff}/48\pi)(\kappa_b^2-\kappa_c^2)$, while the full
mass evolution is
$\dot M=-\mathcal F+\Phi_b\dot Q$, with $\Phi_b=Q/r_b$.  We prove
analytically that along the entire sub-Nariai neutral
Schwarzschild--de~Sitter branch $\kappa_b>\kappa_c$, so that neutral
black holes lose mass monotonically.  Schwinger
pair production provides the discharge channel.  In the rapid-discharge
regime, controlled charged trajectories become effectively neutral
on a timescale short compared with the anomaly-driven Hawking mass-loss
time and then follow the neutral SdS channel toward empty de~Sitter space.
The classical lukewarm locus \(T_b=T_c\) is therefore only the nullcline
of the anomaly-induced heat flux.  Once discharge is included, the
electromagnetic work term tilts the full semiclassical vector field away
from this curve, so the lukewarm locus is not an invariant trajectory of
the coupled flow.  When sufficiently light charged species provide an operative
rapid-discharge channel, the distinguished classical loci---cold/extremal,
charged Nariai, ultracold, and lukewarm---are not semiclassical attractors
for controlled nondegenerate trajectories.  These results give an adiabatically backreacted derivation
of the RN--dS evaporation endpoint in the regime controlled by
anomaly-induced flux and rapid charge discharge, and provide the
semiclassical background for generalized-second-law monotonicity and
conservative quantum-extremal-surface/island estimates.
\end{abstract}

\maketitle
\newpage

\section{Introduction}

Black-hole thermodynamics in de~Sitter (dS) space is intrinsically
nonequilibrium.  A black hole in a static patch coexists with a
cosmological horizon, and the two horizons generally have different
surface gravities, $\kappa_b$ and $\kappa_c$, and hence different
temperatures $T_h=\kappa_h/2\pi$.  Understanding how such a two-horizon
system evolves under Hawking radiation, charge loss, and horizon motion is
therefore essential for a consistent semiclassical account of black holes
in an accelerating universe.

The neutral Schwarzschild--de~Sitter (SdS) problem has recently been clarified from several complementary perspectives.  Four-dimensional greybody-resolved calculations and near-Nariai
backreaction analyses have addressed important aspects of black-hole
evaporation in de~Sitter space \cite{Gregory:2021ozs,Aalsma:2019rpt}. In a Lorentzian
two-dimensional anomaly-induced description, neutral SdS evaporation
admits an analytic nonequilibrium flux law and evolves monotonically
toward empty de~Sitter space~\cite{Easson:2025ekn}.  A complementary Euclidean perspective~\cite{Shi:2026wnk}, has shown that
once an observer is included, the Euclidean Nariai geometry carries the
phase appropriate to black-hole nucleation.  The authors also argued that the
Bousso--Hawking anti-evaporation branch
\cite{Bousso:1997wi} is a singular-state artifact:
for horizon-smooth states, near-Nariai black holes evaporate thermally and
relax back to empty de~Sitter.  These results clarify the neutral problem,
but they do not determine the Lorentzian evolution of charged
Reissner--Nordstr\"om--de~Sitter (RN--dS) black holes, where charge loss,
electromagnetic work, and the full mass/charge phase space become essential.

In this paper we construct an analytic semiclassical model for the joint
discharge and evaporation of charged black holes in de Sitter space.  The starting point is the
spherical reduction of four-dimensional Einstein--Maxwell--$\Lambda$
gravity, supplemented by the Polyakov anomaly action
\cite{Polyakov:1981rd} for the universal two-dimensional conformal
contribution to the radial energy flux.  With positive flux defined
outward from the black-hole horizon toward the cosmological horizon, this
sector fixes the horizon-to-horizon Killing-energy flux to be proportional
to \(\kappa_b^2-\kappa_c^2\).  The charged problem then requires one
additional ingredient: when \(Q\) evolves, the mass equation
contains the electromagnetic work term, \(\dot M=-\mathcal F+\Phi_b\dot Q .\)
Together with a near-horizon Schwinger discharge law, this gives a closed
adiabatic system for \((M,Q)\).

Classical RN--dS geometry contains several celebrated loci: the
cold/extremal branch, the charged Nariai branch, the ultracold point, and
the lukewarm curve $T_b=T_c$~\cite{Ginsparg:1982rs,Romans:1992xg,Bousso:1996au,Castro:2022cuo}.
These geometries are often discussed as candidate equilibrium or endpoint
configurations, but such interpretations are based on static solution
space rather than on the time-dependent semiclassical flow.  Earlier
analyses of charged black-hole evaporation and pair production
\cite{Gibbons:1975kk,HiscockWeems1990,Gibbons:1994ff,Kim:2008xv,Cai:2014qba}
treated local fluxes or fixed backgrounds, while thermodynamic,
Euclidean, instanton, and ``shark-fin'' analyses mapped the RN--dS
parameter space and its distinguished loci using static or
quasistatic methods~\cite{Romans:1992xg,Mann:1995vb,Wang:2002as,
Castro:2022cuo,Morvan:2022aon,Aalsma:2025lcb}.  Observer-normalized first laws are essential for the local thermodynamic
interpretation of these loci, especially near Nariai; here, by contrast,
\(\kappa_b\) and \(\kappa_c\) denote the Killing surface gravities entering
the conserved anomaly-induced Killing-energy current.  Ref.~\cite{Montero:2019ekk} derived
general evolution constraints from effective energy and charge fluxes in
the context of the \textit{Festina Lente} proposal, and near-extremal
de~Sitter black holes were studied in \cite{Bhattacharjee:2025wfv} using
a Schwarzian effective theory.  These works provide important
complementary analyses, but none produced the closed anomaly-induced
horizon-to-horizon flow studied here.

The key distinction is dynamical.  Heat
capacities, fixed-temperature ensembles, and quasistatic first-law
relations do not determine the flow generated by the coupled equations for
\(M\) and \(Q\).  In particular, the classical lukewarm curve remains
important, but only as the nullcline of the Polyakov heat flux: it is the
locus where \(\kappa_b=\kappa_c\) and hence the anomaly flux vanishes.  It
is not a full-flow mass nullcline once \(\dot Q\ne0\), because the mass
equation still contains the electromagnetic work term \(\Phi_b\dot Q\).
Thus the dynamical status of the lukewarm curve must be assessed in the
full \((M,Q)\) system as we discuss below.

The local input for the discharge channel is supported by the classic
Schwinger-discharge picture of charged black holes~\cite{Gibbons:1975kk}
and by recent worldline-instanton work. Ref.~\cite{Lin:2024jug}
computed the radial pair-production profile for extremal
Reissner--Nordstr\"om black holes in asymptotically flat space and found
that Schwinger production is exponentially localized within a Compton
wavelength of the horizon.  Their
analysis shows that the dominant contribution to the discharge rate is
controlled by the near-horizon electric field, with contributions from
larger radii exponentially suppressed.  We use this as motivation for the
horizon-local discharge law in the RN--dS static patch, while embedding it
in a global two-horizon energy-balance equation.

The central result is a two-stage endpoint mechanism.  First, in the
rapid-discharge regime, Schwinger pair production drives $|Q|\to0$ on a
timescale parametrically shorter than the anomaly-induced Hawking
mass-loss time.  Second, once the trajectory reaches the neutral
neighborhood, the neutral SdS ordering takes over.  We prove in
Appendix~\ref{app:kappa-ordering} that
\[
\kappa_b(M,0)>\kappa_c(M,0),
\qquad 0<M<M_{\mathcal N},
\]
so the anomaly-induced flux drives monotonic mass loss along the neutral
branch.  Thus controlled rapid-discharge trajectories evolve toward the
empty de~Sitter endpoint rather than toward a cold, charged Nariai,
ultracold, or lukewarm remnant.  No global ordering of $\kappa_b$ and
$\kappa_c$ in the charged RN--dS family is assumed or required; indeed,
the charged family contains the lukewarm locus where the anomaly flux
vanishes.

Our analysis clarifies the relation to recent near-extremal and
lukewarm studies.  Several works have emphasized the need for a
backreacted one-loop treatment incorporating Hawking flux, electromagnetic
discharge, and observer-based energy conservation
\cite{Aalsma:2023mkz}.  The present construction supplies this missing
ingredient within a controlled spherically symmetric adiabatic model: the
Polyakov sector fixes the conserved horizon-to-horizon flux, the
Schwinger law supplies the charge-loss channel, and the resulting
phase-space flow determines which classical loci can actually be
approached dynamically.  The same time-dependent background provides the input for a
conservative island/QES estimate.  The island prescription and its role in
black-hole evaporation are now well established
\cite{Almheiri:2019psf,Penington:2019npb,Almheiri:2020cfm}, and related
questions in cosmological and de~Sitter settings have been explored in
Refs.~\cite{Hartman:2020khs,Aalsma:2021bit,Shaghoulian:2022fop,Geng:2021wcq}.
We do not attempt a microscopic derivation of the exact fine-grained Page
curve; rather, we use the evaporation dynamics to identify the background
on which the late-time no-island/island saddle competition should be
evaluated. 

The remainder of the paper is organized as follows.
Section~\ref{sec:2d-theory} sets up the spherical reduction of
Einstein--Maxwell--$\Lambda$ gravity and fixes the RN--dS conventions used
throughout.  Section~\ref{sec:fluxes} derives the anomaly-induced
horizon-to-horizon flux and introduces the full mass evolution equation,
including the electromagnetic work term.  Section~\ref{sec:thermo}
establishes the corresponding coarse-grained two-horizon entropy balance.
Section~\ref{sec:evolution-MQ} constructs the coupled $(M,Q)$ evolution
system and the corresponding phase portrait.  Section~\ref{sec:steadystates}
analyzes the cold/extremal, charged Nariai, ultracold, and lukewarm loci in
the full flow.  Section~\ref{sec:endstates} states the endpoint theorem for
the controlled rapid-discharge regime.  Section~\ref{sec:info} gives the
information-theoretic interpretation in terms of a conservative
island/QES estimate.  Appendix~\ref{app:kappa-ordering} proves the neutral
SdS surface-gravity ordering, and Appendix~\ref{app:flux-kappa-diff}
derives the conserved-current origin of the Polyakov
$(\kappa_b^2-\kappa_c^2)$ flux structure.

\section{Dimensional reduction and effective 2D theory}\label{sec:2d-theory}

We begin with Einstein--Maxwell--$\Lambda$ gravity in four dimensions,
\begin{equation}
S_4=\frac{1}{16\pi G_4}\!\int\! d^4x\sqrt{-g}\,(R-2\Lambda)
  -\frac{1}{16\pi}\!\int\! d^4x\sqrt{-g}\,F_{\mu\nu}F^{\mu\nu},
\label{eq:S4}
\end{equation}
in Gaussian units where the RN potential takes its standard form.
Imposing spherical symmetry,
$ds^2=g_{ab}(x)dx^a dx^b+r(x)^2d\Omega_2^2$,
and defining the dilaton $X=r^2$, one finds after integrating over $S^2$
(up to a total derivative) the standard spherically reduced gravity form \cite{Grumiller:2002nm,Mann:1989gh,Klosch:1995fi}:
\begin{equation}
S_2=\frac{1}{4G_4}\!\int\! d^2x\sqrt{-g}
 \Big[X R+U(X)(\nabla X)^2+2V(X)\Big],
\label{eq:S2bare}
\end{equation}
with
\begin{equation}
U(X)=\frac{1}{2X},\qquad
V(X)=1-\Lambda X-\frac{G_4Q^2}{X}.
\label{eq:UVdefs}
\end{equation}
Here \(Q\) denotes the conserved electric charge after eliminating the
two-dimensional Maxwell field at fixed charge.  With our normalization
this gives the effective charge contribution \(-G_4Q^2/X\), and the
two-dimensional action is written with a \(+2V(X)\) term.
A Weyl rescaling to the frame
\[
\tilde g_{ab}=X^{1/2}g_{ab}
\]
removes the dilaton kinetic term and gives
\begin{equation}\label{eq:Vtilde}
\widetilde V(X)=X^{-1/2}
\left(1-\Lambda X-\frac{G_4Q^2}{X}\right).
\end{equation}

In the areal-radius coordinate \(r=\sqrt X\), we define the radial primitive
entering the Schwarzschild-gauge solution by
\begin{equation}
W'(r)=V(r^2).
\end{equation}
For the potential above,
\begin{equation}
W(r)=r-\frac{\Lambda r^3}{3}+\frac{G_4Q^2}{r}.
\end{equation}
This radial primitive \(W(r)\) is the quantity entering the
four-dimensional Schwarzschild-gauge blackening function. With this convention the equation \((r\xi)'=V(r^2)=W'(r)\) directly reproduces
the four-dimensional RN--dS metric.

The static solution then has
\begin{align}
ds^2&=-\xi(r)\,dt^2+\xi(r)^{-1}\,dr^2+r^2d\Omega_2^2,\\
\xi(r;M,Q)&=\frac{W(r)-2G_4M}{r}
=1-\frac{2G_4M}{r}+\frac{G_4Q^2}{r^2}-\frac{\Lambda r^2}{3},
\label{eq:RNds}
\end{align}
which fixes the normalization used in this paper.

\paragraph*{Conventions.---}
We adopt units \(c=\hbar=1\) throughout, while keeping \(G_4\) explicit.  For generic charged nondegenerate RN--dS data, the static patch has three
positive roots
\[
0<r_-<r_b<r_c,
\]
corresponding respectively to the inner/Cauchy, black-hole/event, and
cosmological horizons.  The surface gravities are
\begin{equation}
\kappa_h=\frac12|\xi'(r_h)|,\qquad h=b,c.
\end{equation}
The cold/extremal branch is defined by $r_-=r_b$, the charged Nariai branch
by $r_b=r_c$, and the ultracold point by $r_-=r_b=r_c$.  The neutral
Nariai mass and maximum RN--dS charge scale are
\begin{equation}
M_{\mathcal N}=\frac{1}{3G_4\sqrt{\Lambda}},\qquad
Q_{\mathcal E}=\frac{1}{2\sqrt{G_4\Lambda}},
\label{eq:phase_scales}
\end{equation}
and the phase-space variables are
\begin{equation}
\mu=\frac{M}{M_{\mathcal N}},\qquad
q=\frac{Q}{Q_{\mathcal E}}.
\label{eq:muqdefs}
\end{equation}
We take $Q\ge0$ in the displayed phase portrait; the $Q<0$ half-plane is
obtained by charge conjugation.  In formulas valid with either sign of
$Q$, the Schwinger rate is written with $|Q|$ and $\operatorname{sgn}(Q)$.

For the neutral Schwarzschild--de~Sitter case ($Q=0$), it is convenient to express
the surface gravities in terms of the three real roots $(r_0<0<r_b<r_c)$ of $\xi(r)$.
Using the factorization
\[
\xi(r) = -\frac{\Lambda}{3r}\,(r-r_b)(r-r_c)(r-r_0),
\]
and the definition $\kappa_h = \tfrac12 |\xi'(r_h)|$, one obtains the closed forms
\begin{align}
\kappa_b &=
  \frac{(r_c-r_b)(2r_b+r_c)}
       {2 r_b (r_b^2+r_b r_c + r_c^2)},  \\
\kappa_c &=
  \frac{(r_c-r_b)(r_b+2r_c)}
       {2 r_c (r_b^2+r_b r_c + r_c^2)}.
\end{align}
We use these expressions repeatedly in analyzing the semiclassical flow.

\subsection{2D effective field theory perspective}

The reduction from 4D Einstein--Maxwell--$\Lambda$ gravity to
the 2D dilaton theory in Eq.~\eqref{eq:S2bare} may be viewed as
a controlled effective field theory for the spherically symmetric,
semiclassical dynamics of the RN--dS geometry.  The dilaton encodes the
areal radius of the transverse \(S^2\), while higher spherical harmonics
are integrated out.  The blackening function \eqref{eq:RNds} and dilaton
potential \(\tilde V(X)\) in Eq.~\eqref{eq:Vtilde} reproduce the exact
classical spherically symmetric sector of the four-dimensional theory.

The Polyakov anomaly term \eqref{eq:Sanom} then supplies the universal
pure-conformal contribution to the two-dimensional trace anomaly for \(N\)
conformal fields.  In this pure Polyakov sector, the associated
Killing-energy current is conserved on each static patch and depends only
on the two-dimensional metric.  Charge therefore enters the anomaly-induced
flux only through the geometry \(\xi(r;M,Q)\), and hence through the
horizon data \(\kappa_b(M,Q)\) and \(\kappa_c(M,Q)\), rather than through
a separate charge-specific anomaly coupling.

This is the controlled approximation used below.  In a more complete
spherically reduced treatment, dilaton-dependent anomaly terms,
greybody factors, and higher-derivative corrections can modify
state-dependent details and effective coefficients
\cite{Bousso:1997cg,Kummer:1998sp,Fabbri:2003vy}.  The endpoint argument
requires only that, after rapid discharge has driven the solution into an
effectively neutral neighborhood, the neutral net flux retain the sign
fixed by the SdS ordering proved in Appendix~\ref{app:kappa-ordering}.

To include backreaction from $N$ conformal fields, we add the Polyakov anomaly term \cite{Polyakov:1981rd,Christensen:1977jc,Davies:1976ei,Birrell:1982ix},
\begin{equation}
S_{\rm anom}
   =-\frac{N}{96\pi}\!\int\! d^2x\sqrt{-g}\,R\Box^{-1}R,
\label{eq:Sanom}
\end{equation}
giving $\langle T^\mu{}_{\mu}\rangle=(N/24\pi)R$.
In the adiabatic approximation, the geometry is treated at each advanced
time as an instantaneous static RN--dS patch,
\begin{equation}
ds^2=-\xi(r;M(v),Q(v))\,dv^2+2\,dv\,dr .
\label{eq:EFmetric}
\end{equation}
The corresponding instantaneous Killing vector is
\(\zeta^\mu=(\partial_v)^\mu\), equivalently \((\partial_t)^\mu\) in
Schwarzschild coordinates.  For the covariantly conserved Polyakov stress
tensor, the associated Killing-energy current
\[
J^\mu=-T^\mu{}_\nu \zeta^\nu
\]
is conserved on each fixed static patch,
\[
\nabla_\mu J^\mu=0,
\]
up to higher adiabatic corrections in the slowly evolving geometry.

In what follows we make systematic use of the anomaly-induced Killing-energy
flux derived for the neutral Schwarzschild--de~Sitter reduction in
Ref.~\cite{Easson:2025ekn}.  Because the Polyakov term depends only on the
two-dimensional metric, turning on charge introduces no new anomaly
couplings: the charged RN--dS case differs only
through its geometry, encoded in $\xi(r;M,Q)$ and the associated horizon
surface gravities, and requires no new structural ingredient beyond
replacing the neutral metric function by Eq.~\eqref{eq:RNds}.

\section{Semiclassical fluxes and coupled evolution}\label{sec:fluxes}
We work in the Unruh--de~Sitter state, regular on the future black-hole and
cosmological horizons in the appropriate chiral sectors.  We denote by
\begin{equation}
\mathcal F(M,Q)
   =\frac{N_{\rm eff}}{48\pi}
     \bigl(\kappa_b^2(M,Q)-\kappa_c^2(M,Q)\bigr)
\label{eq:fluxlaw}
\end{equation}
the outward-oriented signed Polyakov flux; positive values correspond to
energy flow from the black-hole horizon toward the cosmological horizon.  The full adiabatic mass evolution during discharge is
\begin{equation}
\dot M = -\mathcal F + \Phi_b\dot Q,
\qquad
\Phi_b=\frac{Q}{r_b},
\label{eq:Mdot}
\end{equation}
where $\Phi_b$ is the black-hole horizon potential in our adopted gauge.

The sign of the work term follows from the black-hole horizon first law.
Differentiating \(\xi(r_b;M,Q)=0\) at fixed \(\Lambda\) gives
\[
0=\xi'(r_b)\dot r_b-\frac{2G_4}{r_b}\dot M
+\frac{2G_4Q}{r_b^2}\dot Q .
\]
Using \(\xi'(r_b)=2\kappa_b\), \(S_b=\pi r_b^2/G_4\), and
\(T_b=\kappa_b/(2\pi)\), this becomes
\[
\dot M=T_b\dot S_b+\Phi_b\dot Q,
\qquad
\Phi_b=\frac{Q}{r_b}.
\]
The outward Polyakov flux removes heat from the black-hole horizon,
\(T_b\dot S_b=-\mathcal F\), giving Eq.~\eqref{eq:Mdot}.

For fixed charge this reduces to the anomaly-only contribution
\begin{equation}
\dot M_{\rm H}=-\mathcal F.
\label{eq:dotM-Hawking}
\end{equation}

For arbitrary sign of $Q$, a useful horizon-local parametrization of the
Schwinger discharge channel is
\begin{equation}
\dot Q
   =-\bar\beta_{\rm Sch}\,\ell_*\,
     \frac{e_{\rm eff}^{2}|Q|}{2\pi r_b^2}\,
     {\rm sgn}(Q)
     \exp\!\left[-\frac{\pi m_{\rm eff}^{2}r_b^2}{e_{\rm eff}|Q|}\right],
\label{eq:dotQ-Sch}
\end{equation}
for the lightest available charged species of mass $m_{\rm eff}$ and
charge $e_{\rm eff}$.  Here \(\ell_*\) is an effective matching length
associated with the near-horizon production region, while the dimensionless
positive coefficient \(\bar\beta_{\rm Sch}\) absorbs greybody factors,
species degeneracies, dimensional-reduction effects, and the matching
between the local four-dimensional pair-production rate and the net charge
current.\footnote{The factor \(\ell_*\) restores the effective length scale
that is lost when the four-dimensional near-horizon production region is
compressed into a horizon-local two-dimensional charge-current ansatz.}

Equation~\eqref{eq:dotQ-Sch} should be understood as an effective
two-dimensional horizon-local current ansatz, not as a first-principles
integration of the four-dimensional volume production rate.  The endpoint
argument uses only its robust sign,
\[
\dot Q\,{\rm sgn}(Q)<0,
\]
together with the assumed rapid-discharge hierarchy \(t_Q\ll t_M\).
For the phase portrait we display $Q\ge0$, in which case the same formula
simply gives $\dot Q<0$.  Equations~\eqref{eq:Mdot} and
\eqref{eq:dotQ-Sch} determine the coupled flow in $(M,Q)$ space.
\paragraph*{Remark on near-horizon locality.---}
The use of a near-horizon Schwinger law in Eq.~\eqref{eq:dotQ-Sch} is
supported by the worldline-instanton analysis of Lin and Shiu~\cite{Lin:2024jug},
who computed the full radial dependence of $\Gamma(r)$ for extremal
Reissner--Nordstr\"om and found that pair production is exponentially
localized within a Compton wavelength of the black-hole horizon.  Their
results show that the dominant contribution to the discharge rate is
governed by the near-horizon region and that contributions from larger
radii are negligibly small.  Incorporating this localized discharge into
the global two-horizon flux law \eqref{eq:Mdot} is a central component of
the present work.

\section{Nonequilibrium thermodynamics}\label{sec:thermo}

For the two-derivative Einstein--Maxwell--\(\Lambda\) theory considered
here, the Wald entropy reduces to the area entropy,
\[
S_h=\frac{A_h}{4G_4}=\frac{\pi r_h^2}{G_4},
\]
and \(T_h=\kappa_h/2\pi\).  For nondegenerate horizons and fixed
\(\Lambda\), the adiabatic horizon first laws are
\[
T_b\dot S_b=\dot M-\Phi_b\dot Q,
\qquad
T_c\dot S_c=-\dot M+\Phi_c\dot Q,
\qquad
\Phi_h=\frac{Q}{r_h}.
\]
Using the full semiclassical mass evolution equation \eqref{eq:Mdot}
one finds
\[
T_b\dot S_b=-\mathcal F,
\qquad
T_c\dot S_c=\mathcal F+(\Phi_c-\Phi_b)\dot Q .
\]
Hence
\[
\dot S_b+\dot S_c
=
\mathcal F\left(\frac{1}{T_c}-\frac{1}{T_b}\right)
+
\frac{(\Phi_c-\Phi_b)\dot Q}{T_c}.
\]
Since
\[
\mathcal F=\frac{N_{\rm eff}}{48\pi}(\kappa_b^2-\kappa_c^2)
\]
with positive coefficient, and since \(T_h=\kappa_h/2\pi>0\) for
nondegenerate horizons, \(\mathcal F\) has the same sign as
\(T_b-T_c\). Therefore
\[
\mathcal F\left(\frac{1}{T_c}-\frac{1}{T_b}\right)
=
\mathcal F\,\frac{T_b-T_c}{T_bT_c}\ge0 .
\]  The second term is also nonnegative during Schwinger
discharge, since \((\Phi_c-\Phi_b)\dot Q\ge0\) for either sign of \(Q\).
Thus the adiabatic two-horizon entropy balance is monotonic in the
discharging regime. This gives the coarse-grained horizon generalized-second-law balance. The relation to fine-grained entropy and the corresponding
island/QES saddle competition is discussed separately in
Sec.~\ref{sec:info}.

\paragraph*{Killing flux versus observer-normalized thermodynamics.---}
The surface gravities \(\kappa_b\) and \(\kappa_c\) appearing in
Eq.~\eqref{eq:fluxlaw} are Killing surface gravities in the normalization
used to define the conserved Polyakov Killing-energy current.  They should
not be confused with observer-normalized temperatures used in physical
static-patch first laws.  In de~Sitter black-hole thermodynamics, changing
the normalization of the timelike Killing vector corresponds to choosing a
different static observer; in particular, the Bousso--Hawking normalization
can assign finite temperatures to the Nariai throat even though the
conventional static-patch Killing surface gravities vanish there
\cite{Aalsma:2025lcb}.  The present flux law instead tracks the signed
conserved Killing-energy transfer between the two outer horizons.  A common
positive rescaling of the Killing vector rescales both outer surface
gravities and the corresponding Killing-energy variable consistently, but
it does not alter the lukewarm nullcline \(\kappa_b=\kappa_c\), the neutral
ordering \(\kappa_b>\kappa_c\), or the direction of the phase-space flow
after all quantities are expressed in the same normalization.  Thus
observer-normalized first-law analyses and the present Killing-current
evolution answer complementary questions: the former concern local
thermodynamic response and heat capacities, while the latter determines
the direction of semiclassical horizon-to-horizon energy transport.

\section{Semiclassical evolution in the $(M,Q)$ plane}
\label{sec:evolution-MQ}

For RN--dS, semiclassical evolution in
the static patch is governed by two coupled processes: anomaly-induced
Hawking radiation and Schwinger pair production of charged particles.
These give rise to a dynamical system for the black-hole mass $M(v)$ and
charge $Q(v)$ as functions of the advanced time $v$.

We adopt the geometric conventions summarized in
Sec.~\ref{sec:2d-theory}.  For a nondegenerate charged RN--dS geometry in
the interior of the physical static-patch domain, there are three positive
roots
\[
0<r_-(M,Q)<r_b(M,Q)<r_c(M,Q),
\]
corresponding to the inner, black-hole, and cosmological horizons.  The
outer horizons have surface gravities \(\kappa_b(M,Q)\) and
\(\kappa_c(M,Q)\).  The neutral Schwarzschild--de~Sitter branch is obtained
as the \(Q\to0\) boundary of this family, where only the black-hole and
cosmological horizons remain.  Along that neutral branch,
Appendix~\ref{app:kappa-ordering} proves
\[
\kappa_b(M,0)>\kappa_c(M,0)>0,
\qquad 0<M<M_{\mathcal N}.
\]
This neutral ordering will be used below after the discharge dynamics has
driven the system into an effectively neutral neighborhood.

\subsection{Anomaly-induced Hawking flux}

As in the neutral SdS case~\cite{Easson:2025ekn}, integrating out $N$
conformal matter fields and including the Polyakov action yields a
conserved Killing flux in the Unruh--de~Sitter state.  The coefficient of
the two-dimensional Polyakov anomaly is $N$, while the effective
coefficient entering the four-dimensional matched flux law is denoted
$N_{\rm eff}$.  With the outward-oriented signed convention of
Eq.~\eqref{eq:fluxlaw}, the Polyakov contribution is $\mathcal F$.  In the
pure Polyakov sector, charge modifies this flux only through the geometry
$\xi(r;M,Q)$ and hence through the horizon data $\kappa_b(M,Q)$ and
$\kappa_c(M,Q)$.  The anomaly-only mass change at fixed charge is
\[
\dot M_{\rm H}=-\mathcal F .
\]
When discharge is present, the total mass evolution is instead the full
energy-balance relation \eqref{eq:Mdot}.

By the neutral SdS ordering proved in Appendix~\ref{app:kappa-ordering},
the anomaly flux is positive along the entire sub-Nariai neutral branch, leading to
\[
\dot M_{\rm H}(M,0)=-\mathcal F(M,0)<0,
\qquad 0<M<M_{\mathcal N},
\]
so neutral SdS black holes lose mass monotonically.

For $Q\neq0$, however, the sign of
$\kappa_b^2-\kappa_c^2$ need not be fixed.  It vanishes on the classical
lukewarm locus, and on the side where $\kappa_b<\kappa_c$ the
outward-oriented Polyakov flux is negative, so the anomaly-only
contribution to the mass change satisfies
\[
\dot M_{\rm H}>0 .
\]
This corresponds to a temporary inward Killing-energy current from the
cosmological horizon toward the black-hole horizon in the effective
two-horizon balance.  In the full system, Eq.~\eqref{eq:Mdot} shows that
this positive anomaly-only mass contribution competes with the
electromagnetic work term.  During discharge, $\Phi_b\dot Q\le0$, while
the charge equation still drives $|Q|$ downward.  Thus an initially
negative Polyakov flux does not obstruct the endpoint mechanism in the
rapid-discharge regime: the trajectory is driven across the lukewarm
nullcline into an effectively neutral neighborhood, where the neutral
ordering, together with continuity near \(Q=0\), fixes the subsequent
positive sign of the Polyakov flux and hence the remaining mass loss.
\subsection{Schwinger discharge of the black-hole charge}

The electric field near the black-hole horizon sources Schwinger pair
production of charged particles~\cite{Schwinger:1951nm}.  For a species of mass $m$ and charge $e$
with $|eQ|\gg1$, the local pair creation rate per unit four-volume is
approximately
\begin{equation}
  \Gamma_{\rm Sch}(M,Q)
  \;\sim\;
  \frac{(e E_b)^2}{4\pi^3}\,
  \exp\!\left[-\,\frac{\pi m^2}{e E_b}\right],
  \label{eq:Gamma-Sch}
\end{equation}
where $E_b=|Q|/r_b^2$ in the conventions of Eq.~\eqref{eq:RNds}.  Motivated by the near-horizon localization of the Schwinger process, we
parametrize the net horizon-local discharge current by the sign-safe
effective law \eqref{eq:dotQ-Sch}.  The dimensionless coefficient
\(\bar\beta_{\rm Sch}\) absorbs the radial profile of the production region,
greybody factors, species degeneracies, dimensional-reduction effects, and
the matching between the local four-dimensional pair-production rate and
the net charge current; the associated microscopic matching length is denoted
\(\ell_*\).  Crucially,
\begin{equation}
  \dot Q_{\rm Sch}(M,Q)\,{\rm sgn}(Q)<0,
\end{equation}
so Schwinger discharge always decreases the magnitude of the charge,
\begin{equation}
  \frac{d}{dv}|Q(v)|<0,
\end{equation}
until the black hole becomes effectively neutral.

In the physically relevant regime with at least one light charged species
obeying $eE_b\gtrsim m^2$ over a substantial portion of the evolution, the
exponential in \eqref{eq:dotQ-Sch} is not strongly suppressed, and the
associated discharge timescale is parametrically shorter than the Hawking
mass-loss timescale.  We return to this hierarchy of timescales below.

\subsection{Dimensionless dynamical system}

It is convenient to introduce dimensionless variables adapted to the RN--dS
``shark-fin'' parameter space.  We use the de~Sitter radius
$L=\sqrt{3/\Lambda}$ as the time scale and the phase-space scales in
Eq.~\eqref{eq:phase_scales}:
\begin{equation}
  \mu=\frac{M}{M_{\mathcal N}},\qquad
  q=\frac{Q}{Q_{\mathcal E}},\qquad
  \tau=\frac{v}{L}.
\end{equation}
The admissible nondegenerate static-patch geometries form a bounded domain
\(\mathcal D\) in the \((\mu,q)\) plane.

In these variables the full semiclassical flow is
\begin{align}
  \frac{d\mu}{d\tau}
  &=F(\mu,q)
   \equiv \frac{L}{M_{\mathcal N}}
      \left[-\mathcal F(M,Q)+\Phi_b(M,Q)\dot Q(M,Q)\right],
   \label{eq:dyn-mu}
  \\
  \frac{dq}{d\tau}
  &=G(\mu,q)
   \equiv \frac{L}{Q_{\mathcal E}}\dot Q(M,Q),
   \label{eq:dyn-q}
\end{align}
with $M=M_{\mathcal N}\mu$, $Q=Q_{\mathcal E}q$, and $\dot Q$ given by
Eq.~\eqref{eq:dotQ-Sch}.  Here the electromagnetic work term
$\Phi_b\dot Q$ is present in the mass equation.  The above also makes clear that the
classical lukewarm locus $\kappa_b=\kappa_c$ is only the anomaly-flux
nullcline $\mathcal F=0$; it is not the full mass nullcline when
$\dot Q\ne0$.

The sign information used below is
\begin{equation}
G(\mu,q)\,{\rm sgn}(q)<0,
\qquad
\Phi_b\dot Q\le0,
\qquad
F(\mu,0)<0\quad(0<\mu<1),
\label{eq:sign-structure}
\end{equation}
where the last inequality is the neutral SdS result.  No global sign is
assumed for the anomaly term away from the neutral line.

\subsection{Phase portrait}

The sign structure \eqref{eq:sign-structure} determines the robust part of
the qualitative phase portrait in the physical domain $\mathcal{D}$.  For
$q>0$ one has $G<0$, while for $q<0$ one has $G>0$; hence the charge
component of the vector field always points toward the neutral line.  Along
$q=0$ the charge is frozen and the mass decreases monotonically by the
neutral SdS theorem.  Away from the neutral line the anomaly term can change
sign across the lukewarm locus, but the work term $\Phi_b\dot Q$ is always
nonpositive during discharge, and in the rapid-discharge regime the charge
motion dominates before the mass changes appreciably.

Thus trajectories in the regime \(t_Q\ll t_M\) are first driven toward an
effectively neutral neighborhood of \(q=0\) and then follow the neutral
channel toward lower \(\mu\).  In the presence of at
least one light charged species with unsuppressed Schwinger production,
the characteristic discharge timescale
\begin{equation}
  t_Q \;\sim\; \frac{|Q|}{|\dot Q_{\rm Sch}|}
\end{equation}
is parametrically shorter than the Hawking mass-loss timescale
\begin{equation}
  t_M \;\sim\; \frac{M}{|\dot M_{\rm H}|},
\end{equation}
so the trajectory first experiences a rapid neutralization burst,
$|q|\to0$, followed by a slower evolution along the neutral SdS channel
$\mu\to0$.

The essential point is that rapid discharge turns the late-time problem
into the neutral SdS problem: once \(Q\) is macroscopically negligible,
the neutral ordering proved in Appendix~\ref{app:kappa-ordering} fixes the
sign of the remaining anomaly-driven Hawking evolution.

We next use this phase-portrait to identify the possible endpoints of the semiclassical evolution and show that, under
realistic assumptions about the charged matter content, neither lukewarm
nor charged Nariai configurations arise as late-time attractors.

\begin{figure}[t]
    \centering
    \includegraphics[width=0.95\columnwidth]{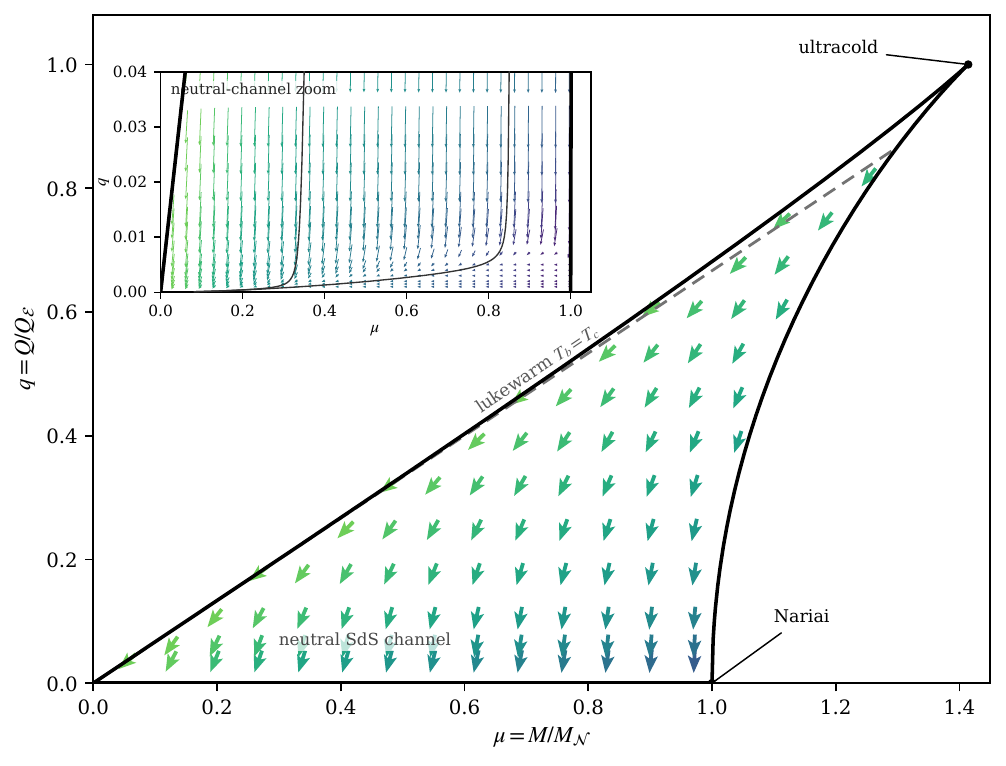}
  \caption{
Semiclassical RN--dS phase portrait in dimensionless variables
$(\mu,q)=(M/M_{\mathcal N},Q/Q_{\mathcal E})$.  The solid black curve is
the physical RN--dS double-root boundary: the cold/extremal branch and the
charged Nariai branch meet at the ultracold point, while the neutral Nariai
point lies at $(\mu,q)=(1,0)$.  The dashed gray line is the classical
lukewarm locus $T_b=T_c$, shown only as a reference curve.
The arrows show the local vector field computed from the full semiclassical
evolution equations, including the electromagnetic work term
$\dot M=-\mathcal F+\Phi_b\dot Q$, together with the Schwinger discharge
law for $\dot Q$.  Colors encode
$\log_{10}\sqrt{(d\mu/d\tau)^2+(dq/d\tau)^2}$, rescaled for visual
contrast.  The inset zooms into the near-neutral region and shows the
turnover into the neutral SdS channel; the thin curves are integrated
trajectories of the same full dynamical system.
}
 \label{fig:RNDS-sharkfin}
\end{figure}

\subsection{Quantitative hierarchy of discharge and evaporation timescales}
\label{sec:timescales}

The phase portrait in Fig.~\ref{fig:RNDS-sharkfin} is governed by the
sign structure~\eqref{eq:sign-structure}, while the detailed shape of the
trajectories is controlled by the \emph{relative} sizes of the mass-loss
and charge-loss rates.  In this
subsection we provide a quantitative estimate of the corresponding
timescales, showing that Schwinger discharge is parametrically faster than
Hawking mass loss for macroscopic black holes in the unsuppressed-discharge
regime.

\paragraph*{Hawking mass-loss timescale.---}
The anomaly-induced flux law~\eqref{eq:fluxlaw} gives
\[
\dot{M}_{\rm H}
   = -\,\frac{N_{\rm eff}}{48\pi}\,
      \bigl[\kappa_b^2(M,Q)-\kappa_c^2(M,Q)\bigr].
\]
Along the neutral branch ($Q=0$) we prove in
Appendix~\ref{app:kappa-ordering} that $\kappa_b>\kappa_c$ for all
$0<M<M_{\mathcal N}$, implying $\dot M_{\rm H}<0$.  Defining
$\Delta\kappa^2\equiv\kappa_b^2-\kappa_c^2$, the characteristic mass-loss
timescale is
\[
t_M
   \;\sim\; \frac{M}{|\dot M_{\rm H}|}
   \;\simeq\;
   \frac{48\pi\,M}{N_{\rm eff}\,\Delta\kappa^2}.
\]

In the small-black-hole regime
\[
2G_4M\ll H^{-1},\qquad H=\sqrt{\Lambda/3},
\]
one has
\[
\kappa_b\simeq \frac{1}{4G_4M},
\qquad
\kappa_c\sim \mathcal O(H).
\]

Thus, when \(\kappa_b\gg\kappa_c\),
\[
\Delta\kappa^2
\simeq \frac{1}{16G_4^2M^2},
\]
and the anomaly-induced mass-loss time is
\[
t_M\sim \frac{M}{|\dot M_{\rm H}|}
\simeq
\frac{768\pi G_4^2M^3}{N_{\rm eff}},
\]
or \(t_M\simeq 768\pi M^3/N_{\rm eff}\) in Planck units.
For a solar-mass black hole, \(M\simeq9\times10^{37}\) in Planck units,
the characteristic anomaly-induced mass-loss time is
\[
t_M\simeq  \frac{10^{117}}{N_{\rm eff}}\,t_{\rm Pl}
\sim \frac{10^{74}}{N_{\rm eff}}\,{\rm s}.
\]
This is a characteristic instantaneous mass-loss
timescale rather than an integrated lifetime.

\paragraph*{Schwinger discharge timescale.---}
The charge-loss law~\eqref{eq:dotQ-Sch} contains the Schwinger exponent
\[
\exp\!\left[-\frac{\pi m_{\rm eff}^2}{e_{\rm eff}E_b}\right],
\qquad
E_b=\frac{|Q|}{r_b^2},
\]
where \(E_b\) is the magnitude of the electric field at the black-hole
horizon.  For \(Q>0\), define
\[
\zeta\equiv \frac{\pi m_{\rm eff}^{2}r_b^2}{e_{\rm eff}Q}.
\]
In the same regime, define the dimensionless charge fraction
\[
f\equiv \frac{|Q|}{\sqrt{G_4}M}.
\]
Up to corrections of order \(H^2r_b^2\), the black-hole horizon is
approximately the Reissner--Nordstr\"om value
\[
r_b\simeq G_4M\left(1+\sqrt{1-f^2}\right),
\]
so in Planck units
\[
\zeta\simeq
\frac{\pi m_{\rm eff}^{2}}{e_{\rm eff}}\,
M\,\frac{\left(1+\sqrt{1-f^2}\right)^2}{f}.
\]
Thus the numerical size of the exponent depends on the charge fraction
\(f\).  For \(f=O(1)\), including near-extremal charge fractions, the
exponent is well below unity for stellar-mass black holes when the electron
is the dominant available charged species, up to convention-dependent
normalizations of the electromagnetic coupling.  In the small-\(f\)
approximation \(r_b\simeq2M\), this reduces to
\[
\zeta\simeq
\frac{4\pi m_{\rm eff}^{2}}{e_{\rm eff}}\,
\frac{M}{f}.
\]

In the unsuppressed regime \(\zeta\ll1\), the Schwinger law
\eqref{eq:dotQ-Sch} reduces, for \(Q>0\), to
\[
\dot Q\simeq
-\bar\beta_{\rm Sch}\ell_*
\frac{e_{\rm eff}^{2}}{2\pi r_b^2}\,Q .
\]
Thus the charge decays exponentially with e-folding time
\[
t_Q=\frac{Q}{|\dot Q|}
\simeq
\frac{2\pi r_b^2}{\bar\beta_{\rm Sch}\ell_* e_{\rm eff}^{2}}.
\]
For \(r_b\simeq2G_4M\),
\[
t_Q\simeq
\frac{8\pi G_4^2M^2}
{\bar\beta_{\rm Sch}\ell_* e_{\rm eff}^{2}}.
\]
For fixed microscopic \(\ell_*\), this scales as \(t_Q\sim M^2\).

\paragraph*{Hierarchy and physical implications.---}
By contrast, in this regime the anomaly-induced Hawking mass-loss time
scales as \(t_M\sim M^3\).
More explicitly, using \(\kappa_b\simeq(4M)^{-1}\) and
\(\kappa_c\ll\kappa_b\),
\[
t_M\simeq \frac{768\pi M^3}{N_{\rm eff}}.
\]
Combining this with the unsuppressed Schwinger estimate
\(t_Q\simeq 2\pi r_b^2/(\bar\beta_{\rm Sch}\ell_*e_{\rm eff}^2)\) in
the small-charge geometric approximation \(r_b\simeq2M\), gives
\[
\frac{t_Q}{t_M}
\simeq
\frac{N_{\rm eff}}
{96\,\bar\beta_{\rm Sch}e_{\rm eff}^{2}}\,
\frac{1}{\ell_* M}.
\]
For a microscopic matching length of order the fundamental scale,
\(\ell_*M\gg1\) for macroscopic black holes, so \(t_Q/t_M\ll1\) in the
unsuppressed-discharge regime.  Thus neutralization occurs on a timescale
parametrically short compared with
Hawking evaporation.\footnote{The Schwinger exponent depends on the charge
fraction \(f=|Q|/(\sqrt{G_4}M)\).  For an electron and a stellar-mass black hole with
\(f=O(1)\), the RN estimate is convention-dependently of order
\(10^{-6}\)--\(10^{-4}\), well below unity.  As \(|Q|\) becomes very small
the exponent eventually grows, but by then the black hole has already been
neutralized for the purposes of the macroscopic phase portrait.}

This is consistent with the worldline-instanton analysis of~\cite{Lin:2024jug}, wherein it was shown that the Schwinger exponent is
minimized in the AdS$_2\times S^2$ near-horizon region and that
contributions from larger radii are exponentially subdominant in their
asymptotically flat extremal RN setting.  This supports the assumption
that the dominant discharge rate is controlled by the near-horizon electric
field \(E_b\).  Curvature corrections, greybody factors, and matching effects may
renormalize the prefactor and the precise range of validity of the
horizon-local estimate.  Within our effective discharge model,
they enter through nonnegative rate or transmission factors, leaving the sign of \(\dot Q\) unaltered.  The rapid-discharge hierarchy
persists in the unsuppressed regime as long as the effective prefactor is
not parametrically small.

This hierarchy is the quantitative origin of the nearly vertical flow in
Fig.~\ref{fig:RNDS-sharkfin}: trajectories rapidly move toward \(q=0\) before
appreciable anomaly-driven Hawking evaporation occurs, while
the full mass evolution continues to include the electromagnetic work term
\(\Phi_b\dot Q\).  They then evolve slowly along the neutral SdS branch.
Consequently, in any theory with an operative discharge channel in the
rapid-discharge regime, controlled trajectories proceed through rapid
neutralization followed by monotonic Hawking-driven decay along the
neutral SdS channel to the empty de~Sitter endpoint.

\section{Steady states, extremal limits, and phase structure}
\label{sec:steadystates}

Thermal equilibrium of the two outer horizons requires $T_b=T_c$.  
For neutral SdS this occurs only in the Nariai limit, while in the charged
RN--dS family there is a classical lukewarm locus with $T_b=T_c\neq0$.
In addition, the static solution space contains distinguished degenerate
boundaries: the cold/extremal branch $r_-=r_b$, the charged Nariai branch
$r_b=r_c$, and their ultracold endpoint $r_-=r_b=r_c$.  None of these
classical loci is a full-flow attractor once semiclassical backreaction and
discharge are included.

\subsection{Charged Nariai limit}

One boundary of the admissible static-patch region is the charged
Nariai curve $r_b=r_c\equiv r_{\rm N}$, where both surface gravities
with respect to the original static-patch Killing time vanish.\footnote{
This refers to the Killing normalization used in the conserved-current
calculation.  Observer-normalized Nariai temperatures need not vanish and
are the appropriate quantities for local heat-capacity questions.}
Although \(\mathcal F\) vanishes identically at these
points, exact charged Nariai data are not an attracting endpoint for
controlled nondegenerate interior trajectories.  Subextremal perturbations
that split the double root move the solution into the interior of the
admissible static-patch region.  Once Schwinger discharge drives
\(|Q|\) toward zero, the geometry is funneled into a neighborhood of
the neutral SdS branch, where the proven inequality
\(\kappa_b(M,0)>\kappa_c(M,0)\) for \(0<M<M_{\mathcal N}\)
(Appendix~\ref{app:kappa-ordering}) ensures a net outward flux and
continued evaporation.

If charged fields capable of Schwinger production exist, the charged
Nariai branch is not an attracting endpoint for the physical interior
phase portrait. Formally, away from the ultracold endpoint, the reduced
adiabatic energy balance is tangent to the charged Nariai double-root
curve.  In geometric variables,
\[
\mathfrak m\equiv G_4M,\qquad
\mathfrak q\equiv \sqrt{G_4}Q,
\]
the charged Nariai double-root conditions give
\[
\mathfrak q^2
=
r_N^2\left(1-\frac{3r_N^2}{L^2}\right),
\qquad
\mathfrak m
=
r_N\left(1-\frac{2r_N^2}{L^2}\right).
\]
Hence, away from the ultracold cusp where the parametrization degenerates,
\[
\frac{d\mathfrak m}{d\mathfrak q}
=
\frac{\mathfrak q}{r_N}.
\]
On the charged Nariai curve \(\mathcal F=0\), and the work term gives
\[
\dot{\mathfrak m}
=
\frac{\mathfrak q}{r_N}\dot{\mathfrak q}
=
\frac{d\mathfrak m}{d\mathfrak q}\dot{\mathfrak q}.
\]
Thus the formal reduced flow is tangent to the double-root boundary, and
exact degenerate data require separate treatment.  The endpoint statement
used here concerns controlled nondegenerate trajectories in the
static-patch domain: subextremal perturbations that split the double root
move the solution into the interior, where the coupled anomaly-discharge
flow drives the system away from the charged equilibrium structure and
toward the neutral SdS channel.
\subsection{Extremal RN--dS limit}

The cold/extremal curve $\kappa_b=0$ forms yet another boundary of the
Reissner--Nordstr\"om--de~Sitter family.  
Classically this corresponds to a degenerate black-hole horizon, but away
from the ultracold endpoint the cosmological horizon remains simple with
\(\kappa_c>0\). 
The anomaly-driven flux therefore satisfies
$\dot M_{\rm H}\propto +\kappa_c^2>0$ along the extremal branch, so the
semiclassical evolution drives the system \emph{away} from extremality in
the mass direction.

The complementary instability in charge is governed by Schwinger discharge:
unless all charged species are absent, $\dot Q_{\rm Sch}$ has the opposite
sign of $Q$ and the magnitude $|Q|$ decreases.  
Thus the anomaly-only contribution pushes away from the cold branch in the
mass direction, while Schwinger discharge decreases $|Q|$.  Consequently, the cold branch is not an attractor of the full flow.

\paragraph*{Relation to the Cauchy horizon.---}
The outer-horizon evolution studied here is logically separate from the
inner-horizon problem of strong cosmic censorship.  Nevertheless, the two
pictures are complementary. Charged type--D geometries that possess inner horizons
are susceptible to semiclassical stress-tensor amplification at those
horizons~\cite{Poisson:1990eh,Ori:1991zz,Hollands:2019whz,Hollands:2020qpe,
Cardoso:2018nvb,Easson:2025uca}.  The present result addresses a different part of
the same problem: in the rapid-discharge regime, the outer adiabatic flow drives the
solution toward the neutral SdS channel, so it does not select a
persistent charged RN--dS endpoint with a late-time inner horizon.  Thus
the phase-space flow and the local Cauchy-horizon instability point in the
same direction, while relying on distinct semiclassical mechanisms.

\subsection{Full-flow status of the lukewarm curve}

The celebrated lukewarm curve is often interpreted as an equilibrium family because
the two outer horizons have equal temperatures.  We now show that this is
only a fixed-charge anomaly-flux condition, not an invariant condition for
the full semiclassical flow.

Let us denote the classical lukewarm locus by
\begin{equation}
\mathcal L_{\rm LW}:\qquad \kappa_b(M,Q)=\kappa_c(M,Q) .
\end{equation}
The locus is depicted in Fig.~\ref{fig:RNDS-sharkfin} by the dashed
gray curve.  On
\(\mathcal L_{\rm LW}\) the Polyakov flux \eqref{eq:fluxlaw} vanishes,
\(\mathcal F=0\).  However, the full semiclassical flow also contains the
discharge term and the associated electromagnetic work contribution, so
\(\mathcal F=0\) is not by itself a full-flow equilibrium condition.

For the standard RN--dS family, the \(Q>0\) lukewarm branch is simply
\[
\mathfrak q=\mathfrak m ,
\]
with the \(Q<0\) branch obtained by charge conjugation.  On the lukewarm
locus \(\mathcal F=0\), so the mass evolution equation gives
\[
\dot{\mathfrak m}=\varphi_b\,\dot{\mathfrak q},
\qquad
\varphi_b\equiv \frac{\mathfrak q}{r_b}.
\]
Hence
\[
\frac{d}{dv}(\mathfrak m-\mathfrak q)
=
(\varphi_b-1)\dot{\mathfrak q}.
\]
Along the subextremal lukewarm branch one has
\[
r_b>\mathfrak q=\mathfrak m,
\]
and therefore \(0<\varphi_b<1\).  Since Schwinger discharge gives
\(\dot{\mathfrak q}<0\), it follows that
\[
\frac{d}{dv}(\mathfrak m-\mathfrak q)>0.
\]
Thus the full semiclassical vector field is not tangent to the lukewarm
curve: even exactly lukewarm initial data immediately leave the lukewarm
locus once discharge is operative.

Equivalently, the lukewarm curve is only the nullcline of the Polyakov
heat flux, not a full-flow mass nullcline.  It cannot be a dynamical
attractor of the coupled semiclassical system unless all charged species
are kinematically inaccessible.  This is the only stability statement
needed for the endpoint argument; no assumption about attraction along the
fixed-charge anomaly-flow nullcline is made.
Combining these observations gives the part of the phase portrait needed
for our endpoint theorem.  The cold/extremal curve is not an attracting
invariant set, the charged Nariai boundary is not an attracting endpoint
for controlled nondegenerate interior trajectories, and the lukewarm locus
is left in the charge direction once Schwinger discharge is active.  Thus
controlled rapid-discharge trajectories waterfall toward the neutral
line (see inset in Fig. 1), where Appendix~\ref{app:kappa-ordering} guarantees monotonic mass
loss along the \(Q=0\) channel.
\section{End states and no--remnant result}
\label{sec:endstates}

We now formulate the \emph{no-remnant consequence} of the phase portrait described
above.  The result applies to controlled nondegenerate trajectories in the
static-patch domain whenever an operative Schwinger discharge channel
exists.  In this rapid-discharge regime, the semiclassical flow carries
charged data toward the neutral SdS channel, where
\(\kappa_b(M,0)>\kappa_c(M,0)\) enforces continued mass loss.  Thus the
classical distinguished loci of the RN--dS family cannot be selected as
late-time remnants by the coupled anomaly-discharge evolution.  The limit
\(M\to0\) should be understood as the formal endpoint of the adiabatic
semiclassical evolution; possible Planck-scale corrections near
\(M\sim M_{\rm Pl}\) lie outside the regime of control.

\begin{theorem}[Global semiclassical endpoint]
\label{thm:endstate}
Let \((M_0,Q_0)\) be sub-Nariai initial data whose adiabatic trajectory
remains in the RN--dS static-patch domain during the rapid-discharge
stage.  Suppose that at least one charged species provides an operative
Schwinger discharge channel with \(t_Q\ll t_M\) until the charge is
macroscopically negligible.  Within the anomaly-induced semiclassical
description governed by Eqs.~\eqref{eq:dyn-mu}--\eqref{eq:dyn-q}, the
controlled nondegenerate flow selects the formal endpoint
\[
(M,Q)=(0,0).
\]
Thus the late-time endpoint in this approximation is empty de~Sitter
space, rather than an extremal, lukewarm, or charged/neutral Nariai
remnant.
\end{theorem}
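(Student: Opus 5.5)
The proof has two stages, matching the timescale hierarchy of Sec.~\ref{sec:timescales}: a fast neutralization stage governed by Eq.~\eqref{eq:dyn-q}, then a slow neutral stage governed by Eq.~\eqref{eq:dyn-mu} restricted to a neighborhood of \(q=0\).

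\textbf{Stage 1: neutralization.} Along any controlled trajectory, Eq.~\eqref{eq:sign-structure} gives \(G\,\mathrm{sgn}(q)<0\), so \(|q(\tau)|\) is strictly decreasing. In the unsuppressed regime the charge decays exponentially with e-folding time \(t_Q\). Over this stage the mass changes by two pieces.
\begin{itemize}
\item The anomaly contribution is at most \(\sup|\mathcal F|\,t_Q\sim (t_Q/t_M)M\), which is small by hypothesis.
\item The work term \(\int\Phi_b\dot Q\,dv\) is nonpositive.
\end{itemize}
The work term is bounded in magnitude by \(\int |Q|\,|dQ|/r_b\lesssim Q_0^2/(2r_{b,\min})\). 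This is finite because the trajectory stays nondegenerate in the static-patch domain, so \(r_b\) stays bounded away from zero. Hence the mass can only decrease, up to an \(O(t_Q/t_M)\) correction. The trajectory therefore remains sub-Nariai, \(\mu<1\), and after a time of order a few \(t_Q\) it enters an arbitrarily small neighborhood \(\{|q|<\epsilon\}\) of the neutral line, with \(\mu\) bounded away from \(1\).

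Section~\ref{sec:steadystates} already shows that none of the distinguished loci trap the flow.
\begin{itemize}
\item The lukewarm curve is crossed transversally, since \(\frac{d}{dv}(\mathfrak m-\mathfrak q)>0\).
\item On the cold branch, \(\dot M_{\rm H}>0\) and \(|Q|\) decreases.
\item The charged Nariai curve is excluded by the nondegeneracy hypothesis.
\end{itemize}
So no obstruction arises before neutralization.

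\textbf{Stage 2: neutral decay.} By Appendix~\ref{app:kappa-ordering}, \(\kappa_b(M,0)>\kappa_c(M,0)\) for \(0<M<M_{\mathcal N}\), so \(F(\mu,0)<0\) on \((0,1)\). Fix a compact interval \([\mu_*,\mu_1]\subset(0,1)\). The surface gravities are continuous in \((M,Q)\) on the nondegenerate domain, so \(\kappa_b^2-\kappa_c^2\ge c>0\) on \([\mu_*,\mu_1]\times[-\epsilon,\epsilon]\) for \(\epsilon\) small enough. This neighborhood is forward invariant in \(q\), since \(|q|\) only decreases, and \(\Phi_b\dot Q\le0\) there. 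Hence \(d\mu/d\tau\le -c'<0\) throughout the neighborhood, and \(\mu\) exits through \(\mu_*\) in finite time.

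Since \(\mu_*\) is arbitrary, \(\mu\to0\). Together with \(|q|\le\epsilon\), which can be shrunk along the way because \(q\to0\) monotonically, this gives the limit \((M,Q)\to(0,0)\) in the formal adiabatic sense. Equivalently, \(\mu\) is a Lyapunov function on the neutral channel. There are no equilibria with \(\mu>0\) on \(q=0\) other than the Nariai point \(\mu=1\), which is excluded because \(\mu\) stays bounded below \(1\).

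\textbf{Main obstacle.} The hardest step is making the hypothesis ``\(t_Q\ll t_M\) until \(Q\) is macroscopically negligible'' quantitative. As \(|Q|\to0\) the Schwinger exponent \(\zeta\) grows, and the exponential decay of \(|Q|\) stalls. One therefore needs the charge at which discharge shuts off, \(\zeta\sim1\), to be small enough that continuity places it inside the \(\epsilon\)-neighborhood where \(\mathcal F>0\).

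My first approach would be to show that the region \(\mathcal F\le0\) near small \(\mu\) lies at \(|q|\) of order \(\mu\): the lukewarm line is \(\mathfrak q=\mathfrak m\). A residual charge fraction \(f\ll1\) then sits strictly on the evaporating side, uniformly along the channel. The remaining worry is the late stage of evaporation, where \(\mu\to0\) while the residual \(q\) is frozen. Here I would use that \(\zeta\propto M/f\) decreases as \(M\) shrinks at fixed \(Q\). Discharge therefore reactivates as \(M\) falls, keeping \(f\) small, and the ratio of residual \(q\) to \(\mu\) stays below the lukewarm threshold all the way to the Planck-scale cutoff.
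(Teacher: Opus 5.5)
Your proposal is correct and follows essentially the same route as the paper's proof. The shared steps are monotone decay of $|Q|$ with $t_Q\ll t_M$ into a near-neutral neighborhood, then the ordering of Appendix~\ref{app:kappa-ordering} plus continuity near $Q=0$ for monotone mass loss, with Sec.~\ref{sec:steadystates} excluding the cold, charged Nariai, and lukewarm loci; your strip/compactness argument and Schwinger-tail discussion make explicit what the paper compresses into ``up to the exponentially slow terminal Schwinger tail.'' Two small refinements: (i) the lukewarm identity you already cite, $\frac{d}{dv}(\mathfrak m-\mathfrak q)=(\varphi_b-1)\dot{\mathfrak q}>0$ on $\mathfrak q=\mathfrak m$ for any nonzero (even exponentially suppressed) $\dot{\mathfrak q}$, makes the evaporating side $\{0<\mathfrak q<\mathfrak m\}$ forward invariant, which settles the sign of $\mathcal F$ in your ``main obstacle'' without the reactivation estimate (that estimate then only controls how fast the endpoint is reached); (ii) ``mass only decreases, hence $\mu<1$'' presumes $\mu_0<1$, whereas charged data below the charged Nariai curve can have $\mu_0$ up to $\sqrt2$, so there you should use the in-domain hypothesis instead, or note that the work term increases the gap below the charged Nariai mass at the same charge because $r_b<r_{\rm N}$ there.
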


\begin{proof}
For \(Q\neq0\), the Schwinger term satisfies
\[
\mathrm{sgn}(Q)\dot Q_{\rm Sch}<0,
\]
so \(|Q|\) decreases monotonically.  In the unsuppressed regime the
discharge time satisfies \(t_Q\ll t_M\) for macroscopic black holes, so
controlled trajectories are rapidly driven into an effectively neutral
neighborhood of \(Q=0\), up to the exponentially slow terminal Schwinger
tail.

Once the charge is macroscopically negligible, the evolution is governed
by the neutral SdS channel.  Appendix~\ref{app:kappa-ordering} proves
\[
\kappa_b(M,0)>\kappa_c(M,0),
\qquad 0<M<M_{\mathcal N},
\]
and hence
\[
\dot M_{\rm H}
=
-\frac{N_{\rm eff}}{48\pi}
\bigl[\kappa_b^2(M,0)-\kappa_c^2(M,0)\bigr]
<0 .
\]
Thus the mass decreases monotonically along the neutral branch toward the
formal adiabatic endpoint \(M\to0\).

The cold/extremal curve, the charged Nariai boundary, and the lukewarm curve
are not attracting endpoints for controlled nondegenerate trajectories, as
shown in Sec.~\ref{sec:steadystates}.  
Therefore, within the formal adiabatic continuum description, the
controlled semiclassical endpoint is
\[
M(v)\to0,\qquad Q(v)\to0,
\]
corresponding to empty de~Sitter space.
\end{proof}

\begin{figure}[t]
    \centering
    \includegraphics[width=.9\columnwidth]{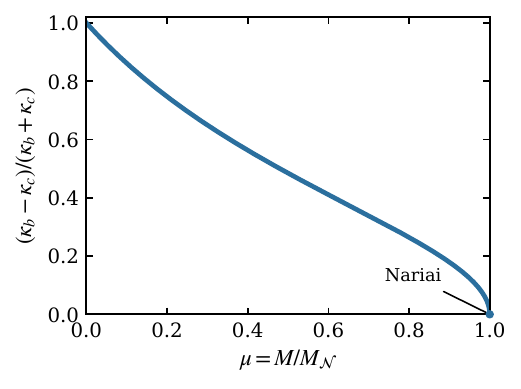}
    \caption{
Relative surface-gravity difference along the neutral Schwarzschild--de~Sitter
branch, $(\kappa_b-\kappa_c)/(\kappa_b+\kappa_c)$, plotted as a function of
$\mu \equiv M/M_{\mathcal N}$.  The curve remains strictly positive for all
$0<\mu<1$, showing that the black-hole horizon has larger Killing temperature than
the cosmological horizon throughout the sub-Nariai neutral family.  The relative
difference vanishes only in the Nariai limit, $\mu=1$, where
$\kappa_b=\kappa_c=0$.  This numerically illustrates the analytic ordering
$\kappa_b>\kappa_c$ proved in Appendix~\ref{app:kappa-ordering}, which implies
$\kappa_b^2-\kappa_c^2>0$ and hence monotonic anomaly-driven mass loss along
the neutral SdS channel.
}
    \label{fig:kappa-diff-SdS}
\end{figure}

\begin{corollary}[Absence of semiclassical remnants]
No finite-mass,
charged, or neutral remnants form under the assumptions of Theorem~\ref{thm:endstate}.  
In particular, extremal RN--dS and charged or neutral Nariai geometries cannot be semiclassical endpoints for controlled nondegenerate rapid-discharge
trajectories.
\end{corollary}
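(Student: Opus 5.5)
The plan is to derive the corollary directly from Theorem~\ref{thm:endstate} by contradiction. Suppose a remnant forms: some controlled nondegenerate trajectory satisfying the theorem's hypotheses has a late-time limit point $(M_\infty,Q_\infty)\neq(0,0)$ in the closure of the static-patch domain $\mathcal D$, or asymptotes to a classical locus (extremal, lukewarm, charged Nariai, or neutral Nariai). I will show that any such limit point is excluded. The tools are the sign structure~\eqref{eq:sign-structure} and the locus-by-locus analysis of Sec.~\ref{sec:steadystates}.

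The charged candidates, $Q_\infty\neq0$, are handled by the charge equation alone. Since $\mathrm{sgn}(Q)\dot Q<0$ wherever $Q\neq0$, $|Q|$ is strictly monotone. A limit point with $Q_\infty\neq0$ would therefore need $\dot Q\to0$ there. But for $|Q|$ bounded away from zero, the rate~\eqref{eq:dotQ-Sch} is bounded below by a positive constant on any compact subset of $\mathcal D$ where the discharge channel is operative. This is the content of the hypothesis $t_Q\ll t_M$ until the charge is macroscopically negligible.

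Hence $Q_\infty=0$, and this immediately rules out extremal, charged Nariai, ultracold and lukewarm remnants. The lukewarm case has an additional direct obstruction: the computation $\frac{d}{dv}(\mathfrak m-\mathfrak q)=(\varphi_b-1)\dot{\mathfrak q}>0$ shows that the vector field is transverse to the curve, so the curve is not invariant.

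For neutral limits with $0<M_\infty<M_{\mathcal N}$, I invoke Appendix~\ref{app:kappa-ordering}. There $\kappa_b(M,0)>\kappa_c(M,0)$, so $F(\mu,0)<0$ strictly. By continuity of $\kappa_{b,c}(M,Q)$ for nondegenerate roots, $F$ stays negative in a neighbourhood of that point. A trajectory approaching it would then have $\dot M$ bounded away from zero, which contradicts convergence. This rules out all finite-mass neutral remnants.

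The neutral Nariai point $\mu=1$ is excluded as follows. The trajectory starts sub-Nariai, the work term satisfies $\Phi_b\dot Q\le0$, and the anomaly term is negative near the neutral line. So $\mu$ cannot increase toward $1$ once the trajectory is effectively neutral. The only remaining accumulation point is $(0,0)$, as the theorem states.

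The main obstacle is the neutral Nariai exclusion. At $\mu=1$ both $\kappa$'s vanish, so $F(\mu,0)\to0$ and continuity alone gives no uniform bound. A trajectory could in principle creep toward $\mu=1$ during the charged phase, where the anomaly term can be positive on the side with $\kappa_b<\kappa_c$.

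I would first try to close this gap using the rapid-discharge hierarchy. During the time $t_Q$ the mass changes by at most $O(t_Q/t_M)$ relative to its initial value, so a trajectory starting at $\mu_0<1$ enters the neutral neighbourhood at $\mu<1-\delta$ for some $\delta>0$. After that the strict monotonicity of $\mu$ keeps it away from $\mu=1$.

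Degenerate exact Nariai or extremal data lie outside the corollary's nondegenerate hypothesis, so the tangency of the flow to the charged Nariai curve does not need to be addressed.
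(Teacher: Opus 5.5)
The paper gives no separate argument for this corollary. It reads it off from the conclusion of Theorem~\ref{thm:endstate}, $M(v)\to0$, $Q(v)\to0$, since every listed remnant has $(M,Q)\neq(0,0)$. You announce the same reduction, but then re-derive the theorem from the sign structure \eqref{eq:sign-structure}. That re-derivation follows the paper's proof of the theorem and is in places tighter. Your uniform lower bound on $|\dot Q|$ for $|Q|$ bounded away from zero is fine, though it comes from the explicit rate \eqref{eq:dotQ-Sch} with $r_b<L/\sqrt3$, not from $t_Q\ll t_M$. Your continuity argument excluding $0<\mu_\infty<1$ is also fine.

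There is a gap in the step you yourself flag, the neutral-Nariai exclusion. The claim that during discharge the mass changes by at most $O(t_Q/t_M)$ is false. $t_M$ controls only the anomaly term, whereas the work term integrates to $\int\Phi_b\,dQ\sim Q_0^2/r_b$, which is an $O(1)$ fraction of $M$ for charge fractions $f=O(1)$.

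For $\mu_0<1$ your conclusion survives once restated as a bound on the \emph{upward} drift, since the work term only lowers $\mu$; your $\delta$ then needs $t_Q/t_M\ll 1-\mu_0$. But $\mathcal D$ also contains nondegenerate charged data with $1\le\mu_0<\sqrt2$: the charged Nariai branch runs from $(1,0)$ to the ultracold point $(\sqrt2,1)$. For such data your bound would even place the neutral landing point above $\mu=1$, so the argument does not cover them.

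The missing idea is to track the area instead of the mass. Since $T_b\dot S_b=-\mathcal F$, the radius $r_b$ can increase only where $\kappa_b<\kappa_c$, i.e.\ on the side $\mathfrak m<\mathfrak q$ of the lukewarm line. On that side $r_b<L/2$, by slicing at fixed $\mathfrak q$ and using that $r_b$ increases with $M$. Hence
\[
r_b(v)\le\max\{r_b(0),L/2\}<L/\sqrt3
\]
along the whole trajectory, because every point of $\mathcal D$ already has $r_b<r_N(\mathfrak q)\le L/\sqrt3$. As $\mathfrak q\to0$ the neutral mass is
\[
\mu=\frac{3\sqrt3}{2}\frac{r_b}{L}\bigl(1-r_b^2/L^2\bigr),
\]
which is increasing for $r_b<L/\sqrt3$ and reaches $1$ only at $r_b=L/\sqrt3$. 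So the landing value is bounded uniformly below $1$, and $(1,0)$ cannot be a limit point, with no appeal to the timescale hierarchy. Alternatively, simply invoke the theorem's conclusion, as the paper does.
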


\section{Information-theoretic interpretation}
\label{sec:info}

The same two-dimensional semiclassical framework provides the
natural input for an island estimate of the radiation entropy.  Here, we do not
attempt a microscopic derivation of the exact fine-grained Page curve
for RN--dS evaporation.  Such formidable derivation would require specifying the
radiation region, the global quantum state, the UV subtraction convention
in $S_{\rm out}$, and the fully time-dependent QES geometry.  We do, however, extract the robust semiclassical consequence of the dynamical solution:
the background entering the island problem is driven rapidly toward the
neutral SdS channel whenever Schwinger discharge is operative.

Island constructions in cosmological and de~Sitter settings have been
studied in several complementary models
\cite{Hartman:2020khs,Aalsma:2021bit,Shaghoulian:2022fop,Geng:2021wcq};
we take these as motivation for a conservative saddle-competition
estimate on the dynamically selected RN--dS background.

The generalized entropy of an island endpoint is
\[
S_{\rm gen}[\partial\mathcal I]
 =
 \frac{A(\partial\mathcal I)}{4G_4}
 +S_{\rm out}(\mathcal R\cup\mathcal I),
\]
or equivalently, in the two-dimensional reduction, $A=4\pi X$.  The fine-grained radiation entropy is estimated by the usual
competition between the no-island and island saddles~\cite{Almheiri:2019psf,Penington:2019npb,Almheiri:2020cfm},
\[
S_{\rm fine}(v)
 =
 \min \{S_{\rm no\,island}(v),S_{\rm island}(v) \}.
\]
In the adiabatic Unruh--dS state, the no-island branch is controlled by
the accumulated Hawking entropy in the static patch, while the island
branch is dominated by the area term evaluated at a QES exponentially
close to the black-hole horizon,
\[
S_{\rm island}(v)
 \simeq
 \frac{A_b(v)}{4G_4}+O(N_{\rm eff})
 =
 \frac{\pi r_b(v)^2}{G_4}+O(N_{\rm eff}).
\]
The Page time is therefore estimated by the branch-crossing condition
\[
S_{\rm no\,island}(v_{\rm Page})
 \simeq
S_{\rm island}(v_{\rm Page}).
\]

Within the present adiabatic approximation, charge affects this estimate
through the time-dependent background
$r_b(M(v),Q(v))$, $\kappa_b(M(v),Q(v))$, and
$\kappa_c(M(v),Q(v))$.  In the rapid-discharge regime studied above,
$t_Q\ll t_M$, so $Q(v)$ becomes negligible long before the
mass changes appreciably.  The subsequent island problem is therefore a
smooth deformation of the neutral SdS problem and quickly reduces to the
neutral channel.  Thus the phase portrait not only fixes the
semiclassical endpoint but also identifies the background on which the
late-time QES competition is to be evaluated.

Since the Page estimate is controlled by the slow Hawking evolution, while
the discharge time satisfies $t_Q\ll t_M$ in the macroscopic
unsuppressed-discharge regime, charge-dependent corrections to the
late-time island saddle are parametrically transient.  The geometry has
already become effectively neutral by the time the no-island and island
branches compete.  Thus the charged problem does not require a new
late-time QES mechanism: the relevant saddle competition reduces to the
neutral SdS channel, with only early-time charge-dependent corrections
during the discharge stage.

Should discharge be kinematically forbidden, the same formalism results instead in
a fixed-charge anomaly-flow problem.  In that nongeneric regime the
geometry may approach the anomaly-flux equilibrium locus rather than the
neutral evaporation channel, and the corresponding entropy history is appropriately analyzed on that fixed-charge background rather than inferred
from the rapid-discharge case.  A full calculation of
the fixed-charge QES geometry lies beyond the scope of this work.

\section{Conclusions}

We have developed an analytic semiclassical model for the nonequilibrium
discharge and evaporation of Reissner--Nordstr\"om--de~Sitter black holes
within the spherically symmetric, anomaly-induced adiabatic approximation.
The Polyakov sector fixes a conserved horizon-to-horizon Killing-energy
flux proportional to \(\kappa_b^2-\kappa_c^2\), while the full mass
evolution includes the electromagnetic work term associated with charge
loss.  Together with a near-horizon Schwinger discharge law, this gives a
closed evolution system for the mass and charge of the black hole.

The main result is a two-stage endpoint paradigm.  In the rapid-discharge
regime, any active sufficiently light charged species drives the system
toward the neutral line on a timescale parametrically shorter than the
anomaly-induced Hawking mass-loss time.  Once the trajectory reaches an effectively neutral neighborhood, the
neutral SdS ordering proved in Appendix~\ref{app:kappa-ordering} fixes the
sign of the remaining Hawking evolution and implies monotonic mass loss.  Controlled rapid-discharge trajectories therefore select the
formal adiabatic endpoint \((M,Q)=(0,0)\), corresponding to empty
de~Sitter space.

This endpoint does not follow by assuming a global temperature ordering in
the charged RN--dS family.  The charged solution space contains the
lukewarm locus \(T_b=T_c\), where the anomaly-induced heat flux vanishes.
The crucial point is dynamical: the lukewarm curve is only a nullcline of
the Polyakov flux, not an invariant locus or mass nullcline of the full
anomaly-discharge flow.  Similarly, the cold/extremal branch, charged
Nariai branch, and ultracold point are distinguished classical loci, but
they are not attracting endpoints when charge loss is dynamically active.
The nonequilibrium dynamics reorganizes the classical phase diagram:
charge is shed first, and the neutral SdS evaporation channel determines
the endpoint.

The same framework gives a controlled thermodynamic interpretation.  The anomaly-mediated heat flow satisfies the coarse-grained horizon
generalized second law in the static patch, while the time-dependent background supplies the appropriate
semiclassical geometry for island/QES estimates.  In the
rapid-discharge regime, the charge-dependent part of the geometry is
transient on the timescale relevant for late-time saddle competition, so
the information-theoretic estimate reduces smoothly to the neutral SdS
channel.

Our framework has a precise domain of validity: spherical
symmetry, the Polyakov anomaly sector for conformal s-wave matter, an
adiabatic Unruh--de~Sitter state, and a near-horizon effective description
of Schwinger discharge.  Greybody factors, dilaton-dependent anomaly terms,
higher-derivative corrections, and finite-\(N\) effects can modify
coefficients and state-dependent details.  However, within any regime in
which the neutral flux retains the sign fixed by
\(\kappa_b>\kappa_c\) and discharge remains faster than mass loss, these
corrections are not expected to change the endpoint mechanism.  Naturally, Planck-scale physics
near the formal endpoint lies outside the regime of control.

More broadly, the RN--dS system suggests a general endpoint principle for
charged multi-horizon spacetimes: when a conserved horizon-to-horizon
energy current is combined with a fast discharge channel, charged
equilibrium loci need not control the late-time fate.  Instead, the
neutral surface-gravity ordering can act as the endpoint selector.  Testing
this principle beyond spherical symmetry, including slow rotation,
greybody-resolved fluxes, and more general anomaly sectors, is a fruitful
direction for future work.

\acknowledgments
It is a pleasure to thank Lars Aalsma, Paul Davies and Tanmay Vachaspati for useful discussions. This work is supported by the U.S. Department of Energy, Office of High Energy Physics, under Award Number DE-SC0019470.

\appendix
\section{Ordering of Horizon Surface Gravities in Sub-Nariai SdS}
\label{app:kappa-ordering}

This appendix provides the mathematical lemma used in Sec.~\ref{sec:endstates}: for every
neutral Schwarzschild--de~Sitter geometry with $0<M<M_{\mathcal N}$, the black-hole horizon has larger Killing surface gravity, equivalently
larger Killing temperature in the normalization used here,
\(\kappa_b>\kappa_c\). 
The inequality fixes the sign of the anomaly-induced mass change
\(\dot M_{\rm H}=-\mathcal F\) along the neutral line of the \((M,Q)\)
phase portrait and thereby guarantees monotonic mass loss toward
\(M\to0\).
We now provide a brief and rigorous proof of the inequality.

\subsection{Setup and notation}

For \(0 < M < M_{\mathcal N}\), the metric function
\[
  f(r) = 1 - \frac{2G_4M}{r} - \frac{\Lambda r^2}{3}
\]
admits three real roots,
\[
  r_0 < 0 < r_b < r_c,
\]
corresponding to the unphysical negative root \(r_0\), the black-hole
horizon \(r_b\), and the cosmological horizon \(r_c\).

In Sec.~\ref{sec:2d-theory} we showed that the surface gravities can be written in the
closed forms
\begin{align}
  \kappa_b &= 
  \frac{(r_c - r_b)\,(2 r_b + r_c)}
       {2\,r_b\,(r_b^2 + r_b r_c + r_c^2)},
  \label{app:kappab}
  \\
  \kappa_c &= 
  \frac{(r_c - r_b)\,(r_b + 2 r_c)}
       {2\,r_c\,(r_b^2 + r_b r_c + r_c^2)}.
  \label{app:kappac}
\end{align}
These expressions follow directly from the factorization
\[
  f(r) = -\frac{\Lambda}{3r}
         (r - r_b)(r - r_c)(r - r_0)
\]
together with \(\kappa_i = \tfrac{1}{2}|f'(r_i)|\), and may be taken as equivalent to the corresponding surface-gravity formulas in the main text.

\subsection{Proof of \(\kappa_b > \kappa_c\)}

In the sub-Nariai regime the following quantities are manifestly positive:
\[
  r_c - r_b > 0, \qquad
  r_b > 0, \qquad
  r_c > 0, \qquad
  r_b^2 + r_b r_c + r_c^2 > 0.
\]
Thus both \(\kappa_b\) and \(\kappa_c\) are strictly positive.

To compare them, factor out the positive constant
\[
  C \equiv \frac{r_c - r_b}
                {2\,(r_b^2 + r_b r_c + r_c^2)} > 0.
\]
Then \eqref{app:kappab}--\eqref{app:kappac} take the form
\begin{align}
  \kappa_b &= C\,\frac{2 r_b + r_c}{r_b},
  \\
  \kappa_c &= C\,\frac{r_b + 2 r_c}{r_c}.
\end{align}
Since \(C>0\), the inequality \(\kappa_b > \kappa_c\) is equivalent to
\begin{equation}
  \frac{2 r_b + r_c}{r_b}
    \;>\;
  \frac{r_b + 2 r_c}{r_c}.
  \label{ineq:core}
\end{equation}

Rewrite both sides:
\[
  \frac{2 r_b + r_c}{r_b} = 2 + \frac{r_c}{r_b}, \qquad
  \frac{r_b + 2 r_c}{r_c} = 2 + \frac{r_b}{r_c}.
\]
Subtracting \(2\) from \eqref{ineq:core} yields the equivalent inequality
\[
  \frac{r_c}{r_b} > \frac{r_b}{r_c}.
\]
Let \(x \equiv r_c / r_b\).  
In the sub-Nariai region one has \(r_c > r_b > 0\), hence \(x>1\).  
The inequality becomes
\[
  x > \frac{1}{x}.
\]
Multiplying by \(x>0\),
\[
  x^2 > 1,
\]
which is clearly true for all \(x>1\).  
This establishes the result:
\[
  \kappa_b > \kappa_c > 0
  \qquad
  \text{for all}\quad 0 < r_b < r_c.
\]

\subsection{Nariai limit}

As \(r_b \to r_c\), the common prefactor \(C\) tends to zero and both
surface gravities vanish:
\[
  \kappa_b \to 0,
  \qquad
  \kappa_c \to 0.
\]
Thus the Nariai spacetime is the unique double-zero of the surface
gravities.  It is a zero-flux boundary configuration, but not a
stable endpoint: perturbations that split the horizons restore the
neutral ordering $\kappa_b>\kappa_c$ and restart evaporation.

\medskip\noindent
\subsection{Implication for semiclassical evolution}

\noindent
Since $\kappa_b>\kappa_c$ for all $0<M<M_{\mathcal N}$, the anomaly
flux satisfies \(\mathcal F(M,0)>0\), equivalently
\(\dot M_{\rm H}(M,0)=-\mathcal F(M,0)<0\), on the entire interval, so the
neutral mass decreases monotonically to $M\to0$.
The Nariai point ($\kappa_b=\kappa_c=0$) is a static solution at the upper mass bound.  
Although radiation from the cosmological horizon provides an external influx, it is always subdominant to the black-hole outflow for $M<M_{\mathcal N}$, so the Nariai configuration is not approached dynamically from the
sub-Nariai neutral branch by Hawking evaporation.

\section{Conserved Killing current and Polyakov flux}
\label{app:flux-kappa-diff}

This appendix verifies two ingredients used in the main text in the presence of charge:
(i) the existence of a radially conserved Killing-energy current in the static patch, and
(ii) the resulting appearance of the combination $\kappa_b^2-\kappa_c^2$ in the
\emph{Polyakov} (pure conformal) contribution to the net horizon-to-horizon flux.
The derivation is the same as in the neutral SdS case, by simply replacing $\xi(r)$ by $\xi(r;M,Q)$.

\subsection{Static patch and a conserved flux}

In the static region $r_b<r<r_c$, the four-dimensional RN--dS metric is
\begin{equation}
ds^2=-\xi(r)\,dt^2+\xi(r)^{-1}dr^2+r^2d\Omega_2^2,
\qquad
\xi(r)=1-\frac{2G_4M}{r}+\frac{G_4Q^2}{r^2}-\frac{\Lambda r^2}{3},
\label{app:RNdS_metric}
\end{equation}
with $\xi(r_b)=\xi(r_c)=0$ and $0<r_b<r_c$.
Introduce the tortoise coordinate $dr_\ast/dr=\xi^{-1}$ and null coordinates
\begin{equation}
u=t-r_\ast,\qquad v=t+r_\ast,
\label{app:nullcoords}
\end{equation}
so the reduced two-dimensional line element takes the conformal form
\begin{equation}
ds_2^2=-\xi(r)\,du\,dv.
\label{app:2dmetric_uv}
\end{equation}

Let \(T_{ab}\) be a symmetric, stationary, covariantly conserved
two-dimensional stress tensor
\begin{equation}
\nabla_a T^{ab}=0,
\label{app:stress_cons}
\end{equation}
in particular the anomaly-induced expectation value obtained from the Polyakov action.
The static Killing vector $\zeta^a=(\partial_t)^a$ defines the Killing-energy current
\begin{equation}
J^a \equiv -T^{a}{}_{b}\,\zeta^{b}.
\label{app:Killing_current_def}
\end{equation}
Using $\nabla_{(a}\zeta_{b)}=0$ and \eqref{app:stress_cons},
\begin{equation}
\nabla_a J^a
=-(\nabla_a T^{a}{}_{b})\zeta^b - T^{ab}\nabla_a\zeta_b
=0.
\label{app:Killing_current_cons}
\end{equation}
In the $(t,r)$ coordinates of \eqref{app:RNdS_metric}, $\sqrt{-g_{(2)}}=1$, so \eqref{app:Killing_current_cons} implies
\begin{equation}
\partial_r J^r=0
\qquad\Rightarrow\qquad
\partial_r\big(T^{r}{}_{t}\big)=0,
\label{app:flux_const_trt}
\end{equation}
since $J^r=-T^{r}{}_{t}$ for $\zeta=\partial_t$.
Thus the mixed component $T^{r}{}_{t}$ is \emph{radially constant} throughout the static patch.

In the null coordinates \eqref{app:nullcoords} one has the identity
\begin{equation}
T^{r}{}_{t}=T_{vv}-T_{uu},
\label{app:Trt_null}
\end{equation}
so we define the conserved mixed component
\begin{equation}
\mathcal I\;\equiv\;T^{r}{}_{t}\;=\;T_{vv}-T_{uu},
\qquad\Rightarrow\qquad
\partial_r\mathcal I=0.
\label{app:J_def}
\end{equation}
This $\mathcal I$ is the signed mixed stress-tensor component.  The
positive outward flux used in the main text is defined with the opposite
sign when the black-hole horizon is hotter: $\mathcal F_{\rm P}\equiv-\mathcal I$.

\subsection{Horizon regularity and universal Schwarzian offsets}

Let $r_h$ be any \emph{simple} Killing horizon of $\xi(r)$, with surface gravity
\begin{equation}
\kappa_h \equiv \tfrac12\,|\xi'(r_h)|.
\label{app:kappa_def}
\end{equation}
Define future-horizon affine Kruskal coordinates separately in each chiral sector:
\begin{equation}
U_h=-e^{-\kappa_h u}\quad\text{(outgoing sector)},
\qquad
V_h=-e^{-\kappa_h v}\quad\text{(ingoing sector)}.
\label{app:Kruskal_affine}
\end{equation}
For the pure conformal (Polyakov) sector with central charge $c=N$, the chiral components obey
the projective (Schwarzian) transformation law
\begin{align}
\langle T_{uu}\rangle
&=\Big(\frac{dU_h}{du}\Big)^{\!2}\langle T_{U_hU_h}\rangle
-\frac{N}{24\pi}\{U_h,u\},
\label{app:proj_u}
\\
\langle T_{vv}\rangle
&=\Big(\frac{dV_h}{dv}\Big)^{\!2}\langle T_{V_hV_h}\rangle
-\frac{N}{24\pi}\{V_h,v\},
\label{app:proj_v}
\end{align}
where $\{\cdot,\cdot\}$ is the Schwarzian derivative.
For the exponential maps in \eqref{app:Kruskal_affine} one has
\begin{equation}
\{U_h,u\}=\{V_h,v\}=-\frac12\,\kappa_h^2.
\label{app:Schwarzian_exp}
\end{equation}

In the static Polyakov solution the chiral components may be written as
\[
\langle T_{uu}\rangle=B(r)+t_u,\qquad
\langle T_{vv}\rangle=B(r)+t_v,
\]
where \(B(r)\) is the common state-independent geometric term and
\(t_u,t_v\) are state-dependent chiral constants.  With the standard
Polyakov normalization,
\begin{equation}
B(r_h)=-\Theta_h,\qquad
\Theta_h\equiv \frac{N}{48\pi}\kappa_h^2.
\label{app:Theta_def}
\end{equation}
Near a simple horizon, an affine Kruskal coordinate in the outgoing sector
may be taken as
\(U_h=-e^{-\kappa_h u}\).  Since
\[
\langle T_{U_hU_h}\rangle
=
\left(\frac{du}{dU_h}\right)^2
\langle T_{uu}\rangle ,
\]
finiteness of \(\langle T_{U_hU_h}\rangle\) at \(U_h=0\) requires
\[
\langle T_{uu}\rangle=O(U_h^2).
\]
Thus the leading horizon value must vanish,
\[
B(r_h)+t_u=0,
\]
and outgoing regularity at \(r_h\) fixes
\[
t_u=\Theta_h.
\]
Similarly, ingoing regularity in an affine coordinate
\(V_h=-e^{-\kappa_h v}\)
fixes
\[
t_v=\Theta_h.
\]

\subsection{Two-horizon outer prescription and the $(\kappa_b^2-\kappa_c^2)$ combination}

In the RN--dS static patch the relevant outer horizons are the black-hole horizon $r_b$ and the
cosmological horizon $r_c$. The ``two-horizon'' outer prescription used in the main text is:
\begin{itemize}
\item impose \emph{outgoing} affine regularity on the future black-hole horizon $\mathcal{H}_b^{+}$,
so \(t_u=\Theta_b\);
\item impose \emph{ingoing} affine regularity on the future cosmological horizon $\mathcal{H}_c^{+}$,
so \(t_v=\Theta_c\).
\end{itemize}
Equivalently,
\begin{equation}
t_u=\Theta_b=\frac{N}{48\pi}\kappa_b^2,
\qquad
t_v=\Theta_c=\frac{N}{48\pi}\kappa_c^2,
\label{app:two_horizon_bc}
\end{equation}
where $\kappa_{b,c}=\tfrac12|\xi'(r_{b,c})|$ are defined by \eqref{app:kappa_def} for the charged
$\xi(r;M,Q)$.

Since the common term \(B(r)\) cancels,
\begin{equation}
\mathcal I=T_{vv}-T_{uu}=t_v-t_u
\end{equation}
is radially conserved.  Using Eq.~\eqref{app:two_horizon_bc}, one obtains
\begin{align}
\mathcal I
&=\Theta_c-\Theta_b
=-\frac{N}{48\pi}\left(\kappa_b^2-\kappa_c^2\right),\\
\mathcal F_{\rm P}
&\equiv-\mathcal I
=\frac{N}{48\pi}\left(\kappa_b^2-\kappa_c^2\right).
\label{app:flux_kappa_diff}
\end{align}

\subsection{Adiabatic evolution and first-law energy balance with discharge}

Equation \eqref{app:flux_kappa_diff} is derived for a fixed static patch.
In the large-$N$/adiabatic regime used in the main text, the evolution is quasistatic: at each
advanced time $v$ the geometry is well-approximated by a static RN--dS patch with parameters
$(M(v),Q(v))$, so \eqref{app:flux_kappa_diff} applies instantaneously with
$\kappa_{b,c}=\kappa_{b,c}(M(v),Q(v))$.

When $Q(v)$ changes due to Schwinger discharge, the \emph{total} mass evolution includes the
electromagnetic work term. The Polyakov sector fixes the outward-oriented signed flux.  In the pure
two-dimensional theory this is $\mathcal F_{\rm P}$ in \eqref{app:flux_kappa_diff}; in the
four-dimensional matched evolution we replace $N$ by $N_{\rm eff}$ and denote the result by
$\mathcal F$.  The first-law energy balance then yields
\begin{equation}
\dot M = -\,\mathcal F + \Phi_b\,\dot Q,
\qquad
\Phi_b=\frac{Q}{r_b},
\label{app:Mdot_energy_balance}
\end{equation}
as used in Eq.~\eqref{eq:Mdot} of the main text.

\subsection{Remark on additional anomaly terms}

If dilaton-dependent anomaly terms are included, as can occur for
spherically reduced four-dimensional matter, the horizon regularity
offsets need not be the pure Polyakov values
\(\Theta_h=N\kappa_h^2/(48\pi)\).  They can be shifted by additional
horizon-local terms depending on the dilaton and its derivatives.  The
conserved-current argument nevertheless remains valid: for any
covariantly conserved, stationary semiclassical stress tensor on the
static patch, \(T^r{}_t=T_{vv}-T_{uu}\) is radially constant.  Imposing
the same two-horizon regularity prescription therefore gives a
difference-of-offsets structure
\[
\mathcal I=\Theta^{\rm tot}_c-\Theta^{\rm tot}_b,
\qquad
\mathcal F=\Theta^{\rm tot}_b-\Theta^{\rm tot}_c .
\]
In the pure Polyakov sector,
\[
\Theta^{\rm tot}_h=\Theta_h=\frac{N}{48\pi}\kappa_h^2,
\]
and this reduces to Eq.~\eqref{app:flux_kappa_diff}.


\bibliography{RNdS_references}

\end{document}